\documentclass[11pt,letterpaper]{article}
\usepackage[T1]{fontenc}
\usepackage{lmodern}
\usepackage[margin=1in]{geometry}
\usepackage{amsmath,amssymb,amsthm,mathtools}
\usepackage{microtype}
\usepackage[hidelinks]{hyperref}
\hypersetup{pdftitle={Continuity of Regularized Channel R\'enyi Divergences},pdfauthor={Jinzhao Wang and Yuxiang Yang},pdfsubject={Continuity proof using Stinespring representations}}
\numberwithin{equation}{section}
\newtheorem{theorem}{Theorem}
\newtheorem{lemma}[theorem]{Lemma}
\newtheorem{corollary}[theorem]{Corollary}
\newcommand{\N}{\mathcal N}
\newcommand{\M}{\mathcal M}
\newcommand{\id}{\operatorname{id}}
\newcommand{\Tr}{\operatorname{Tr}}
\newcommand{\supp}{\operatorname{supp}}
\newcommand{\reg}{\mathrm{reg}}
\newcommand{\cp}{\mathrm{CP}}
\newcommand{\Dt}{D}
\newcommand{\Qt}{Q}
\newcommand{\cL}{\mathcal L}
\newcommand{\cS}{\mathcal S}
\newcommand{\norm}[1]{\left\lVert#1\right\rVert}
\begin{document}
\begin{center}
{\LARGE Continuity of Regularized Channel R\'enyi Divergences\par}
\vspace{8pt}
{\large Jinzhao Wang\textsuperscript{1}\qquad Yuxiang Yang\textsuperscript{2}\par}
\vspace{5pt}
{\small
\textsuperscript{1}Zyphra, San Francisco, CA 94105, USA\\
\textsuperscript{2}QICI Quantum Information and Computation Initiative,\\
School of Computing and Data Science, The University of Hong Kong,\\
Pokfulam Road, Hong Kong, China\par}
\end{center}
\vspace{2pt}
\begin{abstract}
We prove that the regularized, stabilized sandwiched R\'enyi divergence of finite-dimensional quantum channels converges to their regularized relative entropy as the R\'enyi order tends to one. The key tool is the channel hockey-stick divergence: Gour's Stinespring approximation bound and a Schatten norm estimate amplify an asymptotic bound below one into exponential decay at higher threshold rates. For channel pairs with finite max-relative entropy, known operational connections then give exponential strong converses for parallel and adaptive discrimination, a sharp zero--one testing law, and the subchannel asymptotic equipartition property. We also provide a Lean formalization to certify our proof.
\end{abstract}

\section{Introduction}\label{sec:problem}
Suppose an unknown channel is either $\N$ or $\M$ and can be used $n$ times. Inputs may be entangled across uses, and adaptive protocols can process earlier outputs before choosing later inputs \cite{WBHK}. The type-I error is rejecting $\N$ when it is true; the type-II error is accepting $\N$ when $\M$ is true. Asymmetric discrimination asks how fast the type-II error can decay while the type-I error stays below a fixed $\epsilon\in(0,1)$. For state discrimination, quantum Stein's lemma identifies relative entropy as the optimal type-II error exponent \cite{HP,ON}. Channel relative entropy can be nonadditive~\cite{FFRS}, so its regularization is the natural candidate for the channel threshold. The strong converse asks whether rates above this threshold force acceptance under $\N$ to vanish; the exponential strong converse asks that it vanish exponentially. Earlier strong Stein results cover classical channels \cite{Hayashi}, classical--quantum channels \cite{WBHK}, and replacer alternatives \cite{CMW}, while the recent minimax study of Fang et al.~\cite{FCCGH} gives a general entangled-tester result with vanishing type-I error and fixed-error strong converses for restricted models.

Sandwiched R\'enyi divergences at orders $\alpha>1$ bound the tradeoff between these two errors exponentially \cite{MLDFT,WWY}. A channel R\'enyi chain rule extends the bound to adaptive protocols \cite{FF}. To recover the desired regularized relative-entropy threshold by taking $\alpha\downarrow1$, one needs continuity of the regularized channel R\'enyi divergence at order one \cite{FF,FGW}. Continuity for each fixed input and block length does not settle this: the channel divergence is optimized over entangled inputs and arbitrarily large blocks. We prove that the stabilized, regularized divergences defined in Section~\ref{sec:prelim} have the required limit.

\begin{theorem}[Continuity of the sandwiched divergence at order one]\label{thm:main}
For any two finite-dimensional quantum channels $\N,\M$ with the same input and output spaces, the following equality holds:
\begin{equation}\label{eq:main}
 \lim_{\alpha\to1}\Dt_\alpha^{\reg}(\N\|\M)
       =D^{\reg}(\N\|\M).
\end{equation}
\end{theorem}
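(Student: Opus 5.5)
Both one-sided limits have to be handled separately. For $\alpha<1$ I will use monotonicity of the sandwiched divergence in $\alpha$. Since $\Dt_\alpha\le D$ holds for every input and block length, it passes to the stabilized channel divergence and to its regularization. This gives $\limsup_{\alpha\uparrow1}\Dt^{\reg}_\alpha\le D^{\reg}$. Monotonicity in $\alpha$ of each $\Dt_\alpha(\N^{\otimes n}\|\M^{\otimes n})/n$ (optimized over inputs) gives the same monotonicity for the regularization. For $\alpha<1$, lower semicontinuity from the left then follows by exchanging two suprema. The supremum over $\alpha<1$ of the supremum over $n$ and inputs equals the supremum over $n$ and inputs of the supremum over $\alpha<1$. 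For each fixed finite-dimensional input, $\Dt_\alpha\to D$ as $\alpha\uparrow1$. Hence $\lim_{\alpha\uparrow1}\Dt^{\reg}_\alpha=D^{\reg}$. I take it that the regularization for the relevant side is a supremum, or a limit of a superadditive sequence, so that the exchange is legitimate.

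The substantive direction is $\alpha\downarrow1$. Monotonicity gives $\Dt^{\reg}_\alpha\ge D^{\reg}$, so what is needed is $\lim_{\alpha\downarrow1}\Dt^{\reg}_\alpha\le D^{\reg}$. If $D_{\max}(\N\|\M)=\infty$ then both sides should be infinite (support condition), so I assume the max-relative entropy is finite. The plan follows the abstract and goes through the channel hockey-stick divergence $E_\gamma$. First, I express $\Dt_\alpha$ of states through an integral representation in terms of $E_{e^{r}}$ over threshold rates $r$. More precisely, I bound $Q_\alpha=\Tr[(\sigma^{(1-\alpha)/2\alpha}\rho\sigma^{(1-\alpha)/2\alpha})^\alpha]$ by an expression of the form $\int e^{(\alpha-1)r}\,E_{e^r}(\rho\|\sigma)\,dr$, up to a polynomial factor. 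Such a layer-cake bound holds for the Petz quantity. For the sandwiched one, I would work with a Schatten norm estimate that controls $\|\sigma^{-\beta}\rho\sigma^{-\beta}\|_\alpha$ by tail terms. Applied to $n$ copies with an optimal input, this reduces the problem to the decay of $E_{e^{nr}}(\N^{\otimes n}\|\M^{\otimes n})$ for $r>D^{\reg}$. The range of $r$ is cut off at $nD_{\max}$, which keeps the integral finite.

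The key step, and the main obstacle, is to prove exponential decay $E_{e^{nr}}(\N^{\otimes n}\|\M^{\otimes n})\le e^{-nc(r)}$ for every $r>D^{\reg}$. Here $c(r)>0$ must be uniform in inputs, including inputs entangled with a reference system. The starting point is the weak form supplied by the definition of $D^{\reg}$ together with the quantum Stein lemma for the channel. Together they say that for $r>D^{\reg}$, $\limsup_n E_{e^{nr}}<1$, an asymptotic bound below one. I then amplify this, using Gour's Stinespring approximation bound: the channel hockey-stick divergence is attained, up to controlled loss, by comparing Stinespring isometries. On that level the hockey-stick quantity behaves submultiplicatively under blocking. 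Concretely, I fix $r'<r$ with $D^{\reg}<r'$ and a block length $m$ such that $E_{e^{mr'}}\le 1-\delta$. For $n=km$, I would try to show $E_{e^{nr}}\le (1-\delta)^{k}\cdot e^{-k m(r-r')}\cdot\mathrm{poly}$. The extra rate margin $r-r'$ absorbs the Schatten-norm slack produced when tensor powers of the block tester are recombined. I expect this amplification to be the delicate point. The hockey-stick divergence is not multiplicative under tensor products with entangled inputs, and the Stinespring bound is what should replace multiplicativity.

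Given uniform exponential decay, I would finish as follows. Plugging it into the integral bound gives $\frac1n\log Q_\alpha(\N^{\otimes n}\|\M^{\otimes n})\le (\alpha-1)r+o(1)$ for each $r>D^{\reg}$, once $\alpha-1$ is small compared with $c(r)/(D_{\max}-r)$. The contribution from rates above $r$ is killed because $(\alpha-1)(s-r)-c(s)<0$ on $[r,D_{\max}]$. Dividing by $\alpha-1$ gives $\Dt^{\reg}_\alpha\le r$ for $\alpha$ close to one, and letting $r\downarrow D^{\reg}$ concludes the proof.
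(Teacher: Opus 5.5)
\textbf{Verdict: the proposal has a genuine gap at its central step.} Your treatment of $\alpha\uparrow1$ and of the case $D_{\max}(\N\|\M)=\infty$ agrees with the paper. The reduction ``uniform exponential decay of $E_{e^{nr}}(\N^{\otimes n}\|\M^{\otimes n})$ for every $r>D^{\reg}$ implies right continuity'' is also sound. You do not even need a layer-cake formula for the sandwiched quantity: Gour's filter at rate $r$ gives $V_n=W_n+(V_n-W_n)$ with exponentially small leftover, and a tensor-power Schatten estimate such as Lemma~\ref{lem:decomposition} turns this directly into $\Dt^{\reg}_\alpha\le r$ for $\alpha$ near one.

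The problem is that this only relocates the difficulty. Decay above $D^{\reg}$ is equivalent to the theorem, and the amplification from $\limsup_nE<1$ to decay is exactly the part you leave open. (The bound $\limsup_n E<1$ itself needs only data processing, not a channel Stein lemma.)

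Moreover, the concrete inequality you aim for, $E_{e^{nr}}\le(1-\delta)^k e^{-km(r-r')}\,\mathrm{poly}$ with $n=km$, is false. Take replacer channels $\N(X)=\Tr(X)\rho$ and $\M(X)=\Tr(X)\sigma$, with commuting full-rank $\rho,\sigma$ whose log-likelihood ratio has positive variance under $\rho$. Then:
\begin{itemize}
\item $D^{\reg}=D(\rho\|\sigma)=:D$;
\item the channel hockey-stick divergence equals $E_{e^{nr}}(\rho^{\otimes n}\|\sigma^{\otimes n})$;
\item this is at least $(1-e^{-n\epsilon})$ times the probability, under $\rho^{\otimes n}$, that the summed log-likelihood ratio exceeds $n(r+\epsilon)$;
\item that probability decays with the Cram\'er rate at $r+\epsilon$, which is $O((r-D)^2)$ near the threshold.
\end{itemize}
Your bound forces a decay exponent of at least $r-r'$. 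With $r'=D+\eta$, $r=D+2\eta$ and $\eta$ small, this is impossible.

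Structurally, a single filter at rate $r'$ cannot produce submultiplicativity. It yields $V=X+(V-X)$ with $\norm{V-X}_\infty\le b<1$, but $b$ is not small, and $V-X$ is dominated only at CP rate $D_{\max}(\N\|\M)$. So in the expansion of $V^{\otimes k}$ the mixed words are neither small in norm nor cheap in CP order.

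The idea you are missing is a \emph{second} filter at a high rate, whose availability is unconditional. The R\'enyi testing bound gives $E_{2^{n\ell}}\to0$ for every $\ell>d_+=\lim_{\alpha\downarrow1}\Dt^{\reg}_\alpha$. Applying Lemma~\ref{lem:filter} to the same Stinespring isometry at rates $r$ and $\ell=d_++t^{-2}$ gives $V_n=X_n+(W_n-X_n)+(V_n-W_n)$. The three pieces are a cheap piece at CP rate $r$, a correction at CP rate $\ell$ with norm at most $b+o(1)$, and a leftover of vanishing norm. Feed this into the Schatten bound, whose right-hand side factorizes over tensor powers, with $\alpha=n/(n-t)$, and use $\Dt^{\reg}_\alpha\ge d_+$. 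After $n\to\infty$ this yields $1\le(1+b)2^{-t(d_+-r)/2}+b\,2^{1/(2t)}$, and letting $t\to\infty$ contradicts $b<1$ unless $d_+\le r$.

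The amplification thus takes place at the level of R\'enyi divergences and by contradiction. Exponential decay of the hockey-stick divergence above $D^{\reg}$ is a corollary of continuity, not an input you can establish first.
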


The operational implications of continuity were identified in earlier work. Fawzi and Fawzi proved a channel R\'enyi chain rule and characterized the adaptive strong-converse exponent, leaving right continuity at one open \cite{FF}. Fang, Gour, and Wang related this continuity to the exponential strong converse for parallel and adaptive discrimination~\cite{FGW}. Gour established equivalent formulations through the hockey-stick testing threshold, the channel Lorenz divergence, and an asymptotic equipartition property (AEP) with subchannel smoothing\footnote{Gour also showed that the corresponding channel AEP with trace-preserving smoothing fails in general~\cite{Gour}.} \cite[Theorem 14]{Gour}.

We establish an amplification principle for the channel hockey-stick divergence \cite{Gour,SW}: an asymptotic bound strictly below one at a threshold rate forces exponential decay at every higher rate. We adapt Gour's Stinespring filter \cite{Gour} to a prescribed isometry and environment, obtaining an operator approximation with error at most the square root of the hockey-stick divergence, uniformly over inputs. We combine two such approximations through a Schatten norm estimate in Section~\ref{sec:proof}; Section~\ref{sec:consequences} surveys the applications in channel discrimination and channel AEP.

Three concurrent works also prove continuity at order one. Gao, Ji, and Liu \cite{GaoJiLiu} obtain a uniform Stinespring approximation through iterative rate reduction. Schmitt and Sutter \cite{SchmittSutter} use a variational formula and spectral confinement. Zhu and Wang \cite{ZhuWang} use a closely related two-rate, three-amplitude argument, constructing their common approximation directly from a positive Choi slack; they also treat general process testers, including indefinite causal order.\par\medskip

\noindent\textbf{AI usage.} 
The proof presented in this paper was proposed by GPT-6 Astra. AI tools also assisted with drafting and revising the manuscript. The authors take full responsibility for the final content. A Lean formalization of Theorem~\ref{thm:main} is available online~\cite{WangLean}. Comparator checked its theorem statement and replayed the proofs in the Lean kernel.

\subsection{Preliminaries}\label{sec:prelim}
All spaces are finite-dimensional and all logarithms are base two. Write $\cL(H)$ for linear operators, $\cS(H)$ for density operators, $\norm{\cdot}_p$ for the Schatten norms, and $\norm{\cdot}_\diamond$ for the diamond norm. A star on a map ${}^*$ denotes its Hilbert--Schmidt adjoint. For a unit vector $|\psi\rangle$, write $\psi=|\psi\rangle\!\langle\psi|$. We use
\begin{align}
 D(\rho\|\sigma)&=\Tr\rho(\log\rho-\log\sigma),\label{eq:stateD}\\
 \Dt_\alpha(\rho\|\sigma)
 &=\frac1{\alpha-1}\log\Tr\!\left[
 \left(\sigma^{\frac{1-\alpha}{2\alpha}}\rho\,
              \sigma^{\frac{1-\alpha}{2\alpha}}\right)^\alpha\right],
 \qquad \alpha\in[1/2,\infty)\setminus\{1\}.
 \label{eq:stateRenyi}
\end{align}
We use the usual support conventions: negative powers act on $\supp\sigma$, and $D,\Dt_{\alpha>1}$ are infinite unless $\supp\rho\subseteq\supp\sigma$. For $\alpha<1$, a zero trace in \eqref{eq:stateRenyi} gives $+\infty$. We use tensor additivity, monotonicity in $\alpha$, and the limits at one and infinity \cite{MLDFT}, together with data processing \cite{FL,Beigi}.

For channels $\N,\M:\cL(A)\to\cL(B)$, $\mathbb D\in\{D,\Dt_\alpha\}$, the stabilized and regularized divergences are
\begin{align}
 \mathbb D(\N\|\M)
 &=\sup_{\substack{|\psi\rangle\in R\otimes A\\\langle\psi|\psi\rangle=1}}
 \mathbb D\bigl((\id_R\otimes\N)(\psi)\|
                     (\id_R\otimes\M)(\psi)\bigr),
 \qquad R\simeq A,\label{eq:channel}\\
 \mathbb D^{\reg}(\N\|\M)
 &=\lim_{n\to\infty}\frac1n\mathbb D(\N^{\otimes n}\|\M^{\otimes n})
 =\sup_{n\ge1}\frac1n\mathbb D(\N^{\otimes n}\|\M^{\otimes n}).
 \label{eq:regularized}
\end{align}
Pure inputs with $R\simeq A$ suffice: purification cannot decrease the divergence by data processing, and Schmidt decomposition compresses a pure input's reference to dimension $\dim A$. Tensor additivity on product inputs makes the channel divergence superadditive in the number of uses, so Fekete's lemma gives \eqref{eq:regularized}. For an $n$-use block, the reference may have dimension $(\dim A)^n$, and the input may be entangled across all uses.

For two completely positive maps $\mathcal A$ and $\mathcal B$ with the same input and output spaces, write $\mathcal A\le_{\cp}\mathcal B$ when $\mathcal B-\mathcal A$ is completely positive, equivalently $J_{\mathcal A}\le J_{\mathcal B}$ \cite{Choi}, using the unnormalized Choi convention
\begin{equation}\label{eq:choi}
 J_{\mathcal A}=(\id_{A'}\otimes\mathcal A)
                  (|\Omega_A\rangle\!\langle\Omega_A|),
 \qquad |\Omega_A\rangle=\sum_i|i\rangle_{A'}|i\rangle_A.
\end{equation}
We use the max-relative entropy \cite{Datta} and its CP-order extension to maps \cite[Eq.~(13)]{Gour}:
\begin{equation}\label{eq:Dmax}
 \begin{split}
 D_{\max}(\rho\|\sigma)&=\inf\{\log\lambda:\lambda>0,\ \rho\le\lambda\sigma\},\\
 D_{\max}(\mathcal A\|\mathcal B)&=\inf\{\log\lambda:\lambda>0,\ \mathcal A\le_{\cp}\lambda\mathcal B\},
 \end{split}
\end{equation}
with $\inf\varnothing=+\infty$.
For a linear operator $V:A\to B\otimes E$, write
\begin{equation}\label{eq:Phi}
 \Phi_V(X)=\Tr_E(VXV^\dagger).
\end{equation}
A Stinespring isometry of a channel $\N$ is an operator $V$ with $\Phi_V=\N$ and $V^\dagger V=I_A$.
An operator sequence $U_n$ has \emph{CP rate at most $r$} relative to $\M$ if
$\Phi_{U_n}\le_{\cp}C\,2^{nr}\M^{\otimes n}$ for all sufficiently large $n$ and some constant $C$ independent of $n$.

\section{Proof}\label{sec:proof}
For the left continuity, let $\psi$ be a block input and let $\rho_{n,\psi}=(\id\otimes\N^{\otimes n})(\psi)$ and $\sigma_{n,\psi}=(\id\otimes\M^{\otimes n})(\psi)$. Monotonicity in $\alpha$, state-level continuity at one, and interchange of the two suprema give\footnote{This state-level limit includes support mismatch: if $\supp\rho\not\subseteq\supp\sigma$, the trace in \eqref{eq:stateRenyi} tends to $\Tr(\Pi_{\supp\sigma}\rho)<1$ as $\alpha\uparrow1$, so $\Dt_\alpha(\rho\|\sigma)\to+\infty$.}
\begin{align*}
 \lim_{\alpha\uparrow1}\Dt_\alpha^{\reg}(\N\|\M)
 &=\sup_{1/2\le\alpha<1}\sup_{n,\psi}
       \frac1n\Dt_\alpha(\rho_{n,\psi}\|\sigma_{n,\psi})\\
 &=\sup_{n,\psi}\sup_{1/2\le\alpha<1}
       \frac1n\Dt_\alpha(\rho_{n,\psi}\|\sigma_{n,\psi})
 =D^{\reg}(\N\|\M).
\end{align*}

For right continuity, first let $d_\infty=D_{\max}(\N\|\M)$. If $d_\infty=+\infty$, then $\supp J_\N\not\subseteq\supp J_\M$, so a normalized maximally entangled input gives infinite relative entropy and infinite order-$\alpha>1$ divergence. Thus right continuity is immediate in this case.

For the remainder of the proof, assume $d_\infty<\infty$. CP order and  $\Dt_\alpha\le D_{\max}$ (for states) give
\begin{equation}\label{eq:cap}
 \N^{\otimes n}\le_{\cp}2^{n d_\infty}\M^{\otimes n},\qquad
 0\le D^{\reg}(\N\|\M)
 \le\Dt_\alpha^{\reg}(\N\|\M)\le d_\infty\quad(\alpha>1).
\end{equation}
Abbreviate
\begin{equation}\label{eq:thresholds}
 d=D^{\reg}(\N\|\M),\qquad
 d_\alpha=\Dt_\alpha^{\reg}(\N\|\M),\qquad
 d_+=\inf_{\alpha>1}d_\alpha=\lim_{\alpha\downarrow1}d_\alpha.
\end{equation}
We want to prove $d_+= d$, controlling the infimum over $\alpha$ together with the supremum over $n$.

\subsection{Proof sketch}\label{sec:sketch}
We know $d\le d_+$ from \eqref{eq:cap}. Suppose instead that $d<d_+$ and choose $d<r<d_+$. The relative-entropy testing bound \eqref{eq:weak} makes the channel hockey-stick divergence $E_{2^{nr}}(\N^{\otimes n}\|\M^{\otimes n})$ stay asymptotically below one. Write $b=\limsup_n\sqrt{E_{2^{nr}}(\N^{\otimes n}\|\M^{\otimes n})}<1$. In contrast, the R\'enyi testing bound \eqref{eq:high} makes the same divergence vanish at every rate $\ell>d_+$. The definition of $E_\lambda$ appears in \eqref{eq:hchannel} below.

Fix a Stinespring isometry $V_n$ of $\N^{\otimes n}$. Lemma~\ref{lem:filter} gives an approximation $X_n$ at rate $r$ with $\limsup_n\norm{V_n-X_n}_\infty\le b$. Applying it at any rate $\ell>d_+$ gives $W_n$ with $\norm{V_n-W_n}_\infty\to0$. Both use the same environment, so
\[
 \underbrace{V_n}_{\text{exact operator}}
 =\underbrace{X_n}_{\text{cheaper approximation}}
 +\underbrace{(W_n-X_n)}_{\text{correction}}
 +\underbrace{(V_n-W_n)}_{\text{vanishing leftover}}.
\]
The first piece has CP rate at most $r$. The correction has a CP bound at rate $\ell$, up to a fixed multiplier, and norm at most $b+o(1)<1$; the leftover has vanishing norm and a fixed finite CP rate because $d_\infty<\infty$. For each fixed $n$, Lemma~\ref{lem:decomposition} bounds arbitrarily many copies of this $n$-use block, even for inputs entangled across all copies; letting the number of copies grow bounds the regularized divergence $d_\alpha$. In the contradiction argument we set $\ell=d_++t^{-2}$ and $\alpha=n/(n-t)$. We first fix $t$ and let $n\to\infty$, which removes the vanishing leftover. Then $t\to\infty$ suppresses the cheaper piece and brings the correction rate down to $d_+$. Since $d_\alpha\ge d_+$, the resulting inequality is $1\le b$, contradicting $b<1$. Thus $d<d_+$ is impossible. The same argument amplifies any asymptotic testing bound below one into decay at higher rates.

\subsection{Approximation of Stinespring operators}\label{sec:filter}
For $\lambda\ge1$, the state hockey-stick divergence \cite{SW} and its stabilized channel extension \cite{Gour} are
\begin{align}
 E_\lambda(\rho\|\sigma)
 &=\Tr(\rho-\lambda\sigma)_+
 =\max_{0\le T\le I}\Tr[T(\rho-\lambda\sigma)],\label{eq:hstate}\\
 E_\lambda(\N\|\M)
 &=\sup_{\substack{|\psi\rangle\in R\otimes A\\\langle\psi|\psi\rangle=1}}
 E_\lambda\bigl((\id_R\otimes\N)(\psi)\|
                    (\id_R\otimes\M)(\psi)\bigr),\qquad R\simeq A.
 \label{eq:hchannel}
\end{align}
Here $X_+$ is the positive part of a Hermitian operator. For output states $\rho,\sigma$ from a common channel input and a test $0\le T\le I$, let $p=\Tr(T\rho)$ and $\beta=\Tr(T\sigma)$. The type-I and type-II errors are $1-p$ and $\beta$, respectively. Thus $E_\lambda$ maximizes $p-\lambda\beta$ over inputs and tests, and $0\le E_\lambda\le1$.

We use Gour's filter \cite[Eqs.~(44)--(49)]{Gour} in a fixed-dilation form, in which the chosen isometry $V$ and environment $E$ are fixed for all $\lambda$. This allows us to compare approximations at different~$\lambda$.

\begin{samepage}
\begin{lemma}[Gour's filter in a fixed dilation]\label{lem:filter}
Let $\N,\M:\cL(A)\to\cL(B)$ be quantum channels, let $\lambda\ge1$, and fix a Stinespring isometry $V:A\to B\otimes E$ of $\N$. There is a linear operator\footnote{The operator $W_\lambda$ need not be an isometry or a contraction; $\Phi_{W_\lambda}$ need not preserve or decrease trace. A contraction is a linear operator $T$ with $\norm{T}_\infty\le1$.} $W_\lambda:A\to B\otimes E$ such that
\begin{equation}\label{eq:filter}
 \Phi_{W_\lambda}\le_{\cp}\lambda\M,\qquad
 \norm{V-W_\lambda}_\infty\le\sqrt{E_\lambda(\N\|\M)}.
\end{equation}
Since $\norm V_\infty=1$, the triangle inequality also gives
$\norm{W_\lambda}_\infty\le1+\sqrt{E_\lambda(\N\|\M)}$.
The same fixed isometry $V$ and environment $E$ may be used for every $\lambda$.
\end{lemma}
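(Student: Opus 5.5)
The plan is to work entirely at the level of Choi operators and to realize the filter as an operator acting on the Choi vector of the fixed isometry $V$, so that the environment $E$ is never touched. Write $|V\rangle\in A'\otimes B\otimes E$ for the vectorization $|V\rangle=(I_{A'}\otimes V)|\Omega_A\rangle$. Then $J_\N=\Tr_E|V\rangle\!\langle V|$. More generally, any operator $W:A\to B\otimes E$ corresponds to $|W\rangle=(I_{A'}\otimes W)|\Omega_A\rangle$, with $J_{\Phi_W}=\Tr_E|W\rangle\!\langle W|$. For operator norms we use
\[
\norm{W}_\infty^2=\norm{W^\dagger W}_\infty=\norm{\Tr_{BE}|W\rangle\!\langle W|}_\infty ,
\]
since $\Tr_{BE}|W\rangle\!\langle W|$ is the transpose of $W^\dagger W$ on $A'$.

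\textbf{Step 1 (SDP dual of the channel hockey-stick divergence).} I would first use the standard dual form, as in Gour:
\[
E_\lambda(\N\|\M)=\min\bigl\{\norm{\Tr_B Z}_\infty : Z\ge0,\ Z\ge J_\N-\lambda J_\M\bigr\},
\]
with $Z$ acting on $A'\otimes B$. This follows from $E_\lambda=\max_{\psi,T}\Tr[T(\rho-\lambda\sigma)]$ after writing $\rho,\sigma$ in terms of Choi operators with $\psi$ absorbed into a density operator on $A'$, followed by SDP duality with strong duality (Slater holds). Fix an optimal $Z$ and set $P=\lambda J_\M$ and $Y=P+Z$. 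Then $J_\N\le Y$ and $P\le Y$. All inverses below are taken on $\supp Y$; since $J_\N\le Y$, the vector $|V\rangle$ lies in $\supp Y\otimes E$, so nothing is lost.

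\textbf{Step 2 (the filter).} Define $F=PY^{-1}$ on $A'\otimes B$, and let $W_\lambda$ be the operator with $|W_\lambda\rangle=(F\otimes I_E)|V\rangle$. This is well defined because every vector in $A'\otimes B\otimes E$ is the vectorization of a unique operator $A\to B\otimes E$, and $E$ depends only on $V$, not on $\lambda$. The CP bound then follows:
\[
J_{\Phi_{W_\lambda}}=FJ_\N F^\dagger\le FYF^\dagger=PY^{-1}P\le P=\lambda J_\M .
\]
The last step uses $P\le Y$, which gives $P^{1/2}Y^{-1}P^{1/2}\le I$ on the relevant support. Hence $\Phi_{W_\lambda}\le_{\cp}\lambda\M$.

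\textbf{Step 3 (the error bound).} Here $|V-W_\lambda\rangle=((I-F)\otimes I_E)|V\rangle$, and $I-F=(Y-P)Y^{-1}=ZY^{-1}$. Therefore
\[
\Tr_E|V-W_\lambda\rangle\!\langle V-W_\lambda|=(I-F)J_\N(I-F)^\dagger\le ZY^{-1}YY^{-1}Z=ZY^{-1}Z\le Z,
\]
where the last inequality uses $Z\le Y$. Taking $\Tr_B$ preserves the operator order, so
\[
\norm{V-W_\lambda}_\infty^2=\norm{\Tr_{B}\Tr_E|V-W_\lambda\rangle\!\langle V-W_\lambda|}_\infty\le\norm{\Tr_BZ}_\infty=E_\lambda(\N\|\M).
\]
The bound $\norm{W_\lambda}_\infty\le1+\sqrt{E_\lambda}$ is then the triangle inequality, and the fixed-dilation claim holds by construction.

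\textbf{Main obstacle.} The delicate point is choosing the filter so that both operator inequalities close simultaneously. The symmetric choice $P^{1/2}Y^{-1/2}$ leads to $(Y^{1/2}-P^{1/2})^2$, which need not be bounded by $Z$. The choice $F=PY^{-1}$ avoids this because both residuals, $PY^{-1}P$ and $ZY^{-1}Z$, are dominated by $Y$-compressed terms. The remaining care is bookkeeping: supports when $Y$ is singular, the transpose and norm identification between $W^\dagger W$ and $\Tr_{BE}|W\rangle\!\langle W|$, and verifying attainment and strong duality in the dual SDP of Step 1.
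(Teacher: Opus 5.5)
Your proof is correct, and it takes a genuinely different route from the paper. The paper does not build the filter inside $V$'s own dilation. It invokes Gour's construction (a direct-sum dilation plus the CP Radon--Nikodym theorem, after restricting to a minimal environment). That construction yields \emph{some} Stinespring isometry $\widetilde V=X+Y$ of $\N$ on an auxiliary environment $\widetilde E$, with $\Phi_X\le_{\cp}\lambda\M$ and $\norm{Y}_\infty\le\sqrt{E_\lambda(\N\|\M)}$. The paper then transports this split to the prescribed $V$ through a partial isometry $U:\widetilde E\to E$ given by purification uniqueness, using $\Phi_{(I_B\otimes U)X}\le_{\cp}\Phi_X$ and $\norm{(I_B\otimes U)Y}_\infty\le\norm{Y}_\infty$.

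You instead filter the Choi vector $|V\rangle$ on the system side, with $F=PY^{-1}$ acting on $A'\otimes B$. The environment is never touched, so the fixed-dilation clause holds by construction. Both bounds then reduce to two elementary implications, $0\le P\le Y\Rightarrow PY^{-1}P\le P$ and $0\le Z\le Y\Rightarrow ZY^{-1}Z\le Z$, together with the identity $\norm{W}_\infty^2=\norm{\Tr_{BE}|W\rangle\!\langle W|}_\infty$. Your support bookkeeping is sound: since $|V\rangle\in\supp Y\otimes E$, you may replace $I-F$ by $ZY^{-1}$. Your use of the dual SDP is also sound, because both primal and dual are strictly feasible, so an optimal $Z$ exists.

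Your route gives a self-contained, closed-form filter expressed through the optimal dual variable. It needs no Radon--Nikodym theorem, no minimality caveat, and no transfer step. As a by-product, it even gives $\Phi_{V-W_\lambda}\le_{\cp}\mathcal Z$ for the CP map $\mathcal Z$ whose Choi operator is $Z$. The paper's route gives modularity instead. Its transfer step shows that a split obtained in any dilation of $\N$ can be moved to any prescribed one, so Gour's published construction can be reused verbatim and the only new ingredient is the transfer.
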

\end{samepage}

\begin{proof}
Gour's construction \cite[proof of Theorem~2, Eqs.~(44)--(49)]{Gour} applies for every $\lambda\ge1$: take $\gamma=\lambda$ and $s=E_\lambda(\N\|\M)$ in those steps.\footnote{The CP Radon--Nikodym theorem assumes a minimal dilation. Gour's direct-sum dilation may not be minimal; restrict it to its minimal environmental subspace and extend the resulting effect by zero.} The construction gives a Stinespring isometry $\widetilde V:A\to B\otimes\widetilde E$ of $\N$ and operators $X,Y$ on the same space with
\[
 \widetilde V=X+Y,\qquad
 \Phi_X\le_{\cp}\lambda\M,\qquad
 \norm{Y}_\infty\le\sqrt{E_\lambda(\N\|\M)}.
\]
Using $|\Omega_A\rangle$ from \eqref{eq:choi}, the vectors $(I_{A'}\otimes\widetilde V)|\Omega_A\rangle$ and $(I_{A'}\otimes V)|\Omega_A\rangle$ both purify $J_\N$. Purification uniqueness therefore gives a partial isometry $U:\widetilde E\to E$ such that
\[
 (I_{A'B}\otimes U)(I_{A'}\otimes\widetilde V)|\Omega_A\rangle
 =(I_{A'}\otimes V)|\Omega_A\rangle.
\]
The map $L\mapsto(I_{A'}\otimes L)|\Omega_A\rangle$ is injective, so $(I_B\otimes U)\widetilde V=V$. Set $W_\lambda=(I_B\otimes U)X$. Since $U^\dagger U\le I_{\widetilde E}$,
\begin{align*}
 \Phi_X-\Phi_{W_\lambda}
 &=\Phi_{(I_B\otimes(I_{\widetilde E}-U^\dagger U)^{1/2})X}\ge_{\cp}0,\\
 \Phi_{W_\lambda}&\le_{\cp}\Phi_X\le_{\cp}\lambda\M,
 &\norm{V-W_\lambda}_\infty
 &=\norm{(I_B\otimes U)Y}_\infty
 \le\norm Y_\infty\le\sqrt{E_\lambda(\N\|\M)}.
\end{align*}
The transfer may depend on $\lambda$, but $V$ and $E$ do not.
\end{proof}

\subsection{Schatten norm bound}\label{sec:decomposition}
We next bound the regularized R\'enyi divergence using a Stinespring decomposition.

\begin{lemma}[Schatten norm bound]\label{lem:decomposition}
Let $\N,\M:\cL(A)\to\cL(B)$ be quantum channels, let $n,q$ be positive integers, and let $V:A^{\otimes n}\to B^{\otimes n}\otimes E$ be a Stinespring isometry of $\N^{\otimes n}$. Suppose there exist linear operators $U_i:A^{\otimes n}\to B^{\otimes n}\otimes E$, $i=1,\ldots,q$, and constants $b_i\ge0$, $\lambda_i>0$, such that
\[
 V=\sum_{i=1}^q U_i,\qquad
 \norm{U_i}_\infty\le b_i,\qquad
 \Phi_{U_i}\le_{\cp}\lambda_i\M^{\otimes n}.
\]
The operators $U_i$ need not be isometries or contractions. For every $\alpha>1$, with $s=(\alpha-1)/\alpha$,
\begin{equation}\label{eq:decomposition}
 \Dt_\alpha^{\reg}(\N\|\M)
 \le\frac{2}{ns}\log\left(\sum_{i=1}^q
               b_i^{1-s}\lambda_i^{s/2}\right).
\end{equation}
\end{lemma}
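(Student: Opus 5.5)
The plan is to pass to $k$ copies of the $n$-use block, expand $V^{\otimes k}$ into products of the pieces $U_i$, and bound each product term by complex interpolation. The interpolation endpoints are the operator-norm bound $b_i$ and the CP bound $\lambda_i$. The target is
\[
 \frac1{nk}\Dt_\alpha\bigl((\id\otimes\N^{\otimes nk})(\psi)\,\|\,(\id\otimes\M^{\otimes nk})(\psi)\bigr)\le\frac{2}{ns}\log\sum_i b_i^{1-s}\lambda_i^{s/2}
\]
uniformly in the pure input $\psi$ and in $k$. Since the right-hand side does not depend on $k$, letting $k\to\infty$ in \eqref{eq:regularized}, applied to the channel pair $(\N^{\otimes n},\M^{\otimes n})$, gives \eqref{eq:decomposition}.

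\textbf{Step 1 (expansion).} Since $V^{\otimes k}$ is a Stinespring isometry of $\N^{\otimes nk}$ with environment $E^{\otimes k}$, we have $V^{\otimes k}=\sum_{w\in[q]^k}U_w$ with $U_w=U_{w_1}\otimes\cdots\otimes U_{w_k}$. Norms are multiplicative, so $\norm{U_w}_\infty\le\prod_j b_{w_j}$. CP order is preserved under tensor products of CP maps (compare Choi operators: $J_{\Phi_{U_w}}=\bigotimes_j J_{\Phi_{U_{w_j}}}\le\prod_j\lambda_{w_j}J_{\M^{\otimes n}}^{\otimes k}$). Hence $\Phi_{U_w}\le_{\cp}\lambda_w\M^{\otimes nk}$ with $\lambda_w=\prod_j\lambda_{w_j}$.

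\textbf{Step 2 (norm formulation).} Fix a unit vector $|\psi\rangle\in R\otimes A^{\otimes nk}$ and set $\sigma=(\id\otimes\M^{\otimes nk})(\psi)$. For any operator $U$, regard $u=(I_R\otimes U)|\psi\rangle$ as an operator $u:E^{*\otimes k}\to R\otimes B^{\otimes nk}$, so that $uu^\dagger=(\id_R\otimes\Phi_U)(\psi)$. With $v$ built from $V^{\otimes k}$, we get $\rho=vv^\dagger$. Because $(1-\alpha)/(2\alpha)=-s/2$, this gives
\[
 \Tr\bigl[(\sigma^{-s/2}\rho\sigma^{-s/2})^\alpha\bigr]^{1/\alpha}=\norm{\sigma^{-s/2}v}_{2\alpha}^{2},
\]
and hence
\[
 \Dt_\alpha(\rho\|\sigma)=\frac{2}{s}\log\norm{\sigma^{-s/2}v}_{2\alpha}.
\]
By linearity $v=\sum_w u_w$, and the triangle inequality gives $\norm{\sigma^{-s/2}v}_{2\alpha}\le\sum_w\norm{\sigma^{-s/2}u_w}_{2\alpha}$. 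Support issues do not arise: $u_wu_w^\dagger\le\lambda_w\sigma$ forces $\operatorname{ran}u_w\subseteq\supp\sigma$, and therefore $\supp\rho\subseteq\supp\sigma$.

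\textbf{Step 3 (interpolation; the main step).} For a single term $u=u_w$, consider the analytic family $F(z)=\sigma^{-z/2}u$ on the strip $0\le\operatorname{Re}z\le1$, with $\sigma^{-z/2}$ taken on $\supp\sigma$. The two boundary lines give the following bounds.
\begin{itemize}
\item On $\operatorname{Re}z=0$ the factor $\sigma^{-z/2}$ is a partial unitary. This gives $\norm{F}_2\le\norm u_2=(\Tr\Phi_U(\psi))^{1/2}$. Since $\Tr\Phi_U(\psi)=\langle\psi|(I\otimes U^\dagger U)|\psi\rangle$, this is at most $\norm U_\infty\le b_w$.
\item On $\operatorname{Re}z=1$ we get $\norm F_\infty\le\norm{\sigma^{-1/2}u}_\infty$. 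The inequality $uu^\dagger\le\lambda_w\sigma$ gives $\sigma^{-1/2}uu^\dagger\sigma^{-1/2}\le\lambda_w\Pi_\sigma$, so this is at most $\lambda_w^{1/2}$.
\end{itemize}
The Riesz--Thorin/Stein interpolation (Hadamard three-lines) for Schatten norms gives, at $z=s$, $1/p=(1-s)/2=1/(2\alpha)$. Hence
\[
 \norm{\sigma^{-s/2}u_w}_{2\alpha}\le b_w^{1-s}\lambda_w^{s/2}.
\]
The delicate points are justifying the interpolation on the support of a possibly singular $\sigma$ and checking that the exponent matches exactly $2\alpha$. I expect this to be the main obstacle, though the calculation above indicates it closes.

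\textbf{Step 4 (assembly).} The per-word bounds factorize over the letters of $w$, so
\[
 \sum_w b_w^{1-s}\lambda_w^{s/2}=\Bigl(\sum_i b_i^{1-s}\lambda_i^{s/2}\Bigr)^k.
\]
Hence $\Dt_\alpha(\rho\|\sigma)\le\frac{2k}{s}\log\sum_i b_i^{1-s}\lambda_i^{s/2}$ for every $\psi$. Taking the supremum over $\psi$, dividing by $nk$, and letting $k\to\infty$ yields \eqref{eq:decomposition}.
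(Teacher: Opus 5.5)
Your proposal is correct and matches the paper's proof step for step: the $k$-fold tensor power of the block, the expansion into words, the representation $\rho=vv^\dagger$ with the Schatten-$2\alpha$ triangle inequality, the factorization of the word bounds, and $k\to\infty$. The only divergence is the per-word estimate $\norm{\sigma^{-s/2}u_w}_{2\alpha}\le b_w^{1-s}\lambda_w^{s/2}$, which the paper gets without interpolation from the state bound $\Dt_\alpha\le D_{\max}$ plus homogeneity, namely $\rho\le c\sigma\Rightarrow\Qt_\alpha(\rho\|\sigma)\le c^{\alpha-1}\Tr\rho$; your three-lines argument is a valid self-contained derivation of exactly this inequality, so the step you flagged as the main obstacle can also simply be cited.
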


\begin{proof}
For a positive operator $\rho\ge0$, not necessarily normalized, and a state $\sigma$, let
\begin{equation}\label{eq:Q}
 \Qt_\alpha(\rho\|\sigma)
 =\Tr\left[\left(\sigma^{-s/2}\rho\sigma^{-s/2}\right)^\alpha\right],
 \qquad s=\frac{\alpha-1}{\alpha},
\end{equation}
with the usual support convention and $\Qt_\alpha(0\|\sigma)=0$. Homogeneity and the normalized-state bound $\Dt_\alpha\le D_{\max}$ imply, for $\rho\ne0$,
\begin{equation}\label{eq:Qbound}
 \rho\le c\sigma\quad\Longrightarrow\quad
 \Qt_\alpha(\rho\|\sigma)
 \le(\Tr\rho)^\alpha\left(\frac{c}{\Tr\rho}\right)^{\alpha-1}
 =c^{\alpha-1}\Tr\rho.
\end{equation}

Fix a positive integer $m$ and a pure input $\psi$ on $R\otimes A^{\otimes nm}$, with any finite reference $R$, and write
\[
 \rho=(\id_R\otimes\N^{\otimes nm})(\psi),\qquad
 \sigma=(\id_R\otimes\M^{\otimes nm})(\psi).
\]
Expand $V^{\otimes m}=\sum_{\mathbf i}U_{\mathbf i}$, where
$\mathbf i=(i_1,\ldots,i_m)\in\{1,\ldots,q\}^m$ and
$U_{\mathbf i}=U_{i_1}\otimes\cdots\otimes U_{i_m}$.
Reorder the output factors as $B^{\otimes nm}\otimes E^{\otimes m}$.
CP order is preserved under tensor products, and the operator norm is multiplicative. Therefore
\begin{equation}\label{eq:wordbounds}
 \rho_{\mathbf i}:=(\id_R\otimes\Phi_{U_{\mathbf i}})(\psi)
 \le\left(\prod_{j=1}^m\lambda_{i_j}\right)\sigma,
 \qquad
 \Tr \rho_{\mathbf i}\le\prod_{j=1}^m b_{i_j}^2.
\end{equation}
For the trace bound, $\Tr\rho_{\mathbf i}=\norm{(I_R\otimes U_{\mathbf i})|\psi\rangle}^2\le\norm{U_{\mathbf i}}_\infty^2\le\prod_j b_{i_j}^2$.

Choose common orthonormal bases $\{|a\rangle\}$ of $R\otimes B^{\otimes nm}$ and $\{|e\rangle\}$ of $E^{\otimes m}$. Let $C_{\mathbf i}:E^{\otimes m}\to R\otimes B^{\otimes nm}$ be the coefficient matrix of $|v_{\mathbf i}\rangle=(I_R\otimes U_{\mathbf i})|\psi\rangle$:
\[
 |v_{\mathbf i}\rangle=\sum_{a,e}c^{(\mathbf i)}_{ae}|a\rangle\otimes|e\rangle,
 \qquad C_{\mathbf i}=\sum_{a,e}c^{(\mathbf i)}_{ae}|a\rangle\!\langle e|.
\]
Taking partial traces of the component and full output vectors gives
\[
 \rho_{\mathbf i}=C_{\mathbf i}C_{\mathbf i}^\dagger,
 \qquad \rho=\left(\sum_{\mathbf i}C_{\mathbf i}\right)
       \left(\sum_{\mathbf i}C_{\mathbf i}\right)^\dagger.
\]
Thus the sum retains all cross terms. Since
\(
 \norm{\sigma^{-s/2}C_{\mathbf i}}_{2\alpha}
 =\Qt_\alpha(\rho_{\mathbf i}\|\sigma)^{1/(2\alpha)},
\)
equations \eqref{eq:Qbound}--\eqref{eq:wordbounds} and the Schatten triangle inequality yield
\begin{equation}\label{eq:triangle}
 \Qt_\alpha(\rho\|\sigma)^{1/(2\alpha)}
 =\left\lVert\sum_{\mathbf i}\sigma^{-s/2}C_{\mathbf i}\right\rVert_{2\alpha}
 \le\sum_{\mathbf i}\norm{\sigma^{-s/2}C_{\mathbf i}}_{2\alpha}
 \le\sum_{\mathbf i}\prod_{j=1}^m b_{i_j}^{1-s}\lambda_{i_j}^{s/2}
 =\left(\sum_{i=1}^q b_i^{1-s}\lambda_i^{s/2}\right)^m.
\end{equation}
The last equality factors the sum of the product bounds over $\mathbf i=(i_1,\ldots,i_m)$.
All inverse powers act on $\supp\sigma$, which contains the ranges of the coefficient matrices by \eqref{eq:wordbounds}.

Taking logarithms and optimizing the input gives
\[
 \frac1{nm}\Dt_\alpha(\N^{\otimes nm}\|\M^{\otimes nm})
 \le\frac{2}{ns}\log\left(\sum_{i=1}^q b_i^{1-s}\lambda_i^{s/2}\right).
\]
Taking $m\to\infty$ at fixed $n$ and $\alpha$ gives \eqref{eq:decomposition}: the subsequence of block lengths $nm$ has the same regularized limit as in \eqref{eq:regularized}.
\end{proof}

\subsection{Testing bounds and conclusion}\label{sec:conclusion}
The next bounds keep the hockey-stick divergence asymptotically below one at rates above $d$ and make it vanish at rates above $d_+$. We then show that these thresholds coincide.

\begin{lemma}[Relative-entropy and R\'enyi testing bounds]\label{lem:testing}
Let $\N,\M:\cL(A)\to\cL(B)$ be quantum channels with $D_{\max}(\N\|\M)<\infty$. For every positive integer $n$ and every $r>0$,
\begin{equation}\label{eq:weak}
 E_{2^{nr}}(\N^{\otimes n}\|\M^{\otimes n})
 \le\frac{D^{\reg}(\N\|\M)}r+\frac1{nr}.
\end{equation}
For every $\alpha>1$, every $\ell>\Dt_\alpha^{\reg}(\N\|\M)$, and every positive integer $n$,
\begin{equation}\label{eq:high}
 E_{2^{n\ell}}(\N^{\otimes n}\|\M^{\otimes n})
 \le2^{-n(\alpha-1)[\ell-\Dt_\alpha^{\reg}(\N\|\M)]}.
\end{equation}
\end{lemma}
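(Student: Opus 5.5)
Both bounds reduce to a single input and a single test. By definition \eqref{eq:hchannel}, it suffices to fix a block input $\psi$ on $R\otimes A^{\otimes n}$, set $\rho=(\id_R\otimes\N^{\otimes n})(\psi)$ and $\sigma=(\id_R\otimes\M^{\otimes n})(\psi)$, and take a test $0\le T\le I$ with $p=\Tr(T\rho)$ and $\beta=\Tr(T\sigma)$. The goal is to bound $p-\lambda\beta$ with $\lambda=2^{nr}$ or $2^{n\ell}$, uniformly in $\psi$ and $T$. In both cases we apply data processing under the binary measurement $\{T,I-T\}$, which maps $\rho,\sigma$ to the distributions $(p,1-p)$ and $(\beta,1-\beta)$. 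The channel divergence dominates the state divergence for this input, and by \eqref{eq:regularized} it is bounded by $n$ times the regularized one: $D(\rho\|\sigma)\le nD^{\reg}(\N\|\M)$ and $\Dt_\alpha(\rho\|\sigma)\le n\Dt_\alpha^{\reg}(\N\|\M)$. Finiteness of $D_{\max}$ guarantees that these quantities are finite, so no support issues arise.

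For \eqref{eq:weak}, we use the standard binary relative-entropy bound
\[
D\bigl((p,1-p)\|(\beta,1-\beta)\bigr)\ge -h(p)-p\log\beta\ge -1-p\log\beta,
\]
which follows by dropping the nonnegative term $-(1-p)\log(1-\beta)$. This gives $p\log(1/\beta)\le nD^{\reg}+1$. We may assume $p-2^{nr}\beta>0$, since otherwise there is nothing to prove. Then $\beta<p2^{-nr}\le 2^{-nr}$, so $\log(1/\beta)> nr$. Hence $p\,nr\le nD^{\reg}+1$, and therefore $p-2^{nr}\beta\le p\le D^{\reg}/r+1/(nr)$. Taking the supremum over $\psi$ and $T$ yields \eqref{eq:weak}.

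For \eqref{eq:high}, we use the sandwiched R\'enyi data-processing inequality in the binary case and keep only the first outcome. For $\alpha>1$,
\[
2^{(\alpha-1)\Dt_\alpha(\rho\|\sigma)}\ge p^\alpha\beta^{1-\alpha}+(1-p)^\alpha(1-\beta)^{1-\alpha}\ge p^\alpha\beta^{1-\alpha}.
\]
Writing $\delta=\Dt_\alpha^{\reg}(\N\|\M)$, this gives $p^\alpha\le\beta^{\alpha-1}2^{n(\alpha-1)\delta}$. Again assume $p>\lambda\beta$ with $\lambda=2^{n\ell}$, so that $\beta<p/\lambda$. Substituting yields
\[
p^\alpha< p^{\alpha-1}\lambda^{-(\alpha-1)}2^{n(\alpha-1)\delta},
\]
that is, $p<2^{-n(\alpha-1)(\ell-\delta)}$, and hence $p-\lambda\beta\le p$ obeys \eqref{eq:high}. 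The edge case $p=0$ is trivial.

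Neither step is a real obstacle. The points needing care are these:
\begin{itemize}
\item the direction of the inequalities in the case split $p\le\lambda\beta$ versus $p>\lambda\beta$;
\item the fact that the stabilized channel divergence at block length $n$ is at most $n$ times the regularized value, which holds because the regularized value is the supremum in \eqref{eq:regularized};
\item ensuring $\beta>0$ when dividing, which holds because if $\beta=0$ and $p>0$, finite $D_{\max}$ forces a contradiction.
\end{itemize}
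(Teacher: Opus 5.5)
Your proposal is correct and follows essentially the same route as the paper. You restrict to tests with positive objective, use $D_{\max}<\infty$ to get $\beta>0$, and apply binary data processing, dropping the nonnegative term $-(1-p)\log(1-\beta)$ in the first bound and the second R\'enyi outcome in the second. You then bound the block divergence by $n$ times the regularized one; the only difference is cosmetic, since you substitute $\beta<p/\lambda$ into $p^\alpha\le\beta^{\alpha-1}2^{n(\alpha-1)\delta}$ where the paper rearranges the logarithmic form $n\ell+\frac{1}{\alpha-1}\log p$.
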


\begin{proof}
Tests with nonpositive objective satisfy both upper bounds automatically. Fix an input and test with $p-2^{nr}\beta>0$, and let $\rho,\sigma$ be its two output states. Since $p>0$ and $p\le2^{n d_\infty}\beta$ by \eqref{eq:cap}, we also have $\beta>0$.

Write $h_2(p)=-p\log p-(1-p)\log(1-p)$ for binary entropy, with $0\log0=0$. Data processing under the binary measurement gives
\begin{align*}
 D(\rho\|\sigma)
 &\ge p\log\frac p\beta+(1-p)\log\frac{1-p}{1-\beta}\\
 &=-h_2(p)-p\log\beta-(1-p)\log(1-\beta)
 \ge nr\,p-1.
\end{align*}
Here $\beta<p2^{-nr}\le2^{-nr}$, $h_2(p)\le1$, and $-(1-p)\log(1-\beta)\ge0$. On the other hand, \eqref{eq:channel}--\eqref{eq:regularized} give
\[
 D(\rho\|\sigma)\le D(\N^{\otimes n}\|\M^{\otimes n})\le nD^{\reg}(\N\|\M).
\]
Thus $p\le D^{\reg}(\N\|\M)/r+1/(nr)$. Since the testing objective is at most $p$, optimizing proves \eqref{eq:weak}.

For a test with $p-2^{n\ell}\beta>0$, we again have $p,\beta>0$ and $\beta<p2^{-n\ell}$. Since $1-\alpha<0$, this implies $p^\alpha\beta^{1-\alpha}\ge p\,2^{n\ell(\alpha-1)}$. Data processing for the R\'enyi divergence, followed by dropping the nonnegative term for the other outcome, gives
\[
 n\Dt_\alpha^{\reg}(\N\|\M)
 \ge\Dt_\alpha(\rho\|\sigma)
 \ge\frac1{\alpha-1}\log(p^\alpha\beta^{1-\alpha})
 \ge n\ell+\frac1{\alpha-1}\log p.
\]
Rearranging gives $p\le2^{-n(\alpha-1)[\ell-\Dt_\alpha^{\reg}(\N\|\M)]}$. Using $p-2^{n\ell}\beta\le p$ and optimizing proves~\eqref{eq:high}.
\end{proof}

Since $d_+=\inf_{\alpha>1}d_\alpha$, each $\ell>d_+$ admits a fixed $\alpha>1$ with $d_\alpha<\ell$. Lemma~\ref{lem:testing}~\eqref{eq:high} therefore implies, for every $\ell>d_+$,
\begin{equation}\label{eq:highvanishing}
 E_{2^{n\ell}}(\N^{\otimes n}\|\M^{\otimes n})\longrightarrow0
 \qquad(n\to\infty).
\end{equation}

\begin{proof}[Proof of Theorem~\ref{thm:main}]
We first establish the implication
\begin{equation}\label{eq:thresholdcriterion}
 \limsup_{n\to\infty}E_{2^{nr}}(\N^{\otimes n}\|\M^{\otimes n})<1
 \quad\Longrightarrow\quad d_+\le r\qquad(r\ge0).
\end{equation}
Fix $r\ge0$ satisfying the testing condition and assume, for a contradiction, that $r<d_+$. Let
\[
 b_n=\sqrt{E_{2^{nr}}(\N^{\otimes n}\|\M^{\otimes n})},
 \qquad b=\limsup_{n\to\infty}b_n.
\]
The testing condition gives $b<1$. Fix any $t\ge1$. Use $d_++t^{-2}$ as the higher approximation rate and let
\[
 \varepsilon_n=\sqrt{E_{2^{n(d_++t^{-2})}}(\N^{\otimes n}\|\M^{\otimes n})}.
\]
For this fixed $t$, \eqref{eq:highvanishing} gives $\varepsilon_n\to0$. Choose a Stinespring isometry $V_n$ of $\N^{\otimes n}$. Lemma~\ref{lem:filter}, applied twice to this same isometry, supplies operators $X_n,W_n$ with
\begin{align}
 \Phi_{X_n}&\le_{\cp}2^{nr}\M^{\otimes n},
 &\norm{V_n-X_n}_\infty&\le b_n,\label{eq:lowapprox}\\
 \Phi_{W_n}&\le_{\cp}2^{n(d_++t^{-2})}\M^{\otimes n},
 &\norm{V_n-W_n}_\infty&\le\varepsilon_n.
 \label{eq:highapprox}
\end{align}
Define $Y_n=W_n-X_n$ and $Z_n=V_n-W_n$, so $V_n=X_n+Y_n+Z_n$. The triangle inequality gives
\[
 \norm{X_n}_\infty\le1+b_n,\qquad
 \norm{Y_n}_\infty\le b_n+\varepsilon_n,\qquad
 \norm{Z_n}_\infty\le\varepsilon_n.
\]
The inequality $\Phi_{U-W}\le_{\cp}2\Phi_U+2\Phi_W$,\footnote{This follows from $2\Phi_U+2\Phi_W-\Phi_{U-W}=\Phi_{U+W}\ge_{\cp}0$.} together with $r<d_+\le d_\infty$ and $t\ge1$, gives
\begin{equation}\label{eq:threecaps}
 \Phi_{Y_n}\le_{\cp}4\,2^{n(d_++t^{-2})}\M^{\otimes n},
 \qquad
 \Phi_{Z_n}\le_{\cp}4\,2^{n(d_\infty+1)}\M^{\otimes n}.
\end{equation}
For the second bound, use $\Phi_{V_n}=\N^{\otimes n}\le_{\cp}2^{nd_\infty}\M^{\otimes n}$ and $d_++t^{-2}\le d_\infty+1$.
For integers $n>t$, apply Lemma~\ref{lem:decomposition} with $s=t/n$, equivalently $\alpha=n/(n-t)>1$. Multiplying \eqref{eq:decomposition} by $ns/2=t/2$ and exponentiating gives
\[
 2^{td_\alpha/2}\le(1+b_n)^{1-t/n}\,2^{tr/2}
 +2^{t/n}(b_n+\varepsilon_n)^{1-t/n}\,2^{t(d_++t^{-2})/2}
 +2^{t/n}\varepsilon_n^{1-t/n}\,2^{t(d_\infty+1)/2}.
\]
Using $d_+\le d_\alpha$ and dividing by $2^{td_+/2}$ now gives
\begin{equation}\label{eq:threepiecebound}
 1\le\frac{2^{td_\alpha/2}}{2^{td_+/2}}
 \le(1+b_n)^{1-t/n}\,2^{-t(d_+-r)/2}
 +2^{t/n}(b_n+\varepsilon_n)^{1-t/n}\,2^{1/(2t)}
 +2^{t/n}\varepsilon_n^{1-t/n}\,2^{t(d_\infty+1-d_+)/2}.
\end{equation}
Keep $t$ fixed and let $n\to\infty$. With $\delta_n=t/n$, weighted AM--GM gives $x^{1-\delta_n}\le(1-\delta_n)x+\delta_n\le x+\delta_n$ for every $x\ge0$. Applying this to $1+b_n$, $b_n+\varepsilon_n$, and $\varepsilon_n$ shows that their powered factors have upper limits at most $1+b$, $b$, and $0$, respectively. Since $2^{t/n}\to1$, taking the upper limit in \eqref{eq:threepiecebound} gives
\begin{equation}\label{eq:contradiction}
 1\le(1+b)\,2^{-t(d_+-r)/2}+b\,2^{1/(2t)}
 \qquad(t\ge1).
\end{equation}
Now let $t\to\infty$. Since $r<d_+$, the right-hand side tends to $b<1$, a contradiction. Thus $d_+\le r$. Taking $n\to\infty$ first avoids any need for uniform convergence in $t$.

To complete the proof, we show that $d_+=d$. By \eqref{eq:weak}, for every $r>d$,
\[
 \limsup_{n\to\infty}E_{2^{nr}}(\N^{\otimes n}\|\M^{\otimes n})
 \le\frac dr<1.
\]
By \eqref{eq:thresholdcriterion}, $d_+\le r$ for every $r>d$. Letting $r\downarrow d$ and using $d\le d_+$ proves $d_+=d$.
\end{proof}

\section{Applications}\label{sec:consequences}
We now survey some applications developed in earlier works: Fawzi and Fawzi established the adaptive R\'enyi bound \cite{FF}, Fang, Gour, and Wang related continuity to channel discrimination \cite{FGW}, and Gour established the equivalence of the parallel testing threshold, subchannel AEP, and Lorenz AEP \cite[Theorem 14]{Gour}\footnote{Gour uses the amortized relative entropy, which equals our $D^{\reg}(\N\|\M)$ by \cite[Corollary 3.7]{FFRS}.}. We apply Theorem~\ref{thm:main} to these known connections, giving short proofs of the resulting channel-discrimination, subchannel-smoothing, and sharp hockey-stick threshold statements and recording explicit exponential bounds. Throughout this section we assume $D_{\max}(\N\|\M)<\infty$.

\subsection{Channel discrimination}
The goal of channel discrimination is to distinguish between two channels $\N,\M: A\to B$ given $n$ uses of the channel.
Two classes of testing protocols are often compared. A \emph{parallel} protocol prepares a joint input on $R\otimes A^{\otimes n}$ and makes a joint binary measurement on $R\otimes B^{\otimes n}$ after all uses, without feedback between uses. An \emph{adaptive} protocol uses the channel sequentially, processing each output together with retained quantum memory to prepare the next input. Parallel protocols are included as the special case in which later inputs do not depend on earlier outputs. We thus focus on the adaptive case. Let $p_n$ ($\beta_n$) be the probability that the final test yields $\N$ when the actual channel is $\N$ ($\M$). Note that $\beta_n$ is the type-II error (false negative).

We recall the adaptive testing bound of~\cite{WBHK}, expressed through the regularized divergence in~\cite{FF}. The sandwiched R\'enyi chain rule~\cite{FF} gives, for $\alpha>1$ and any states $\rho,\sigma$ on $R\otimes A$,
\[
 \Dt_\alpha\bigl((\id_R\otimes\N)(\rho)\|
                       (\id_R\otimes\M)(\sigma)\bigr)
 \le \Dt_\alpha(\rho\|\sigma)+\Dt_\alpha^{\reg}(\N\|\M).
\]
The common initial state has divergence zero. Each channel use adds at most $\Dt_\alpha^{\reg}(\N\|\M)$, while processing between uses cannot increase the divergence. Let $\rho_n,\sigma_n$ be the final states under $\N,\M$; their divergence is therefore at most $n\Dt_\alpha^{\reg}(\N\|\M)$. For $p_n,\beta_n>0$, data processing under the final binary test gives
\[
 n\Dt_\alpha^{\reg}(\N\|\M)
 \ge\Dt_\alpha(\rho_n\|\sigma_n)
 \ge\frac1{\alpha-1}\log(p_n^\alpha\beta_n^{1-\alpha})
 =\frac\alpha{\alpha-1}\log p_n-\log\beta_n.
\]
Rearranging yields
\begin{equation}\label{eq:app-testing}
 p_n\le
 2^{n\frac{\alpha-1}{\alpha}\Dt_\alpha^{\reg}(\N\|\M)}
 \beta_n^{\frac{\alpha-1}{\alpha}}.
\end{equation}
If $\beta_n=0$, finiteness of the final-state divergence forces $p_n=0$; if $p_n=0$, the bound is immediate.

\begin{corollary}[Strong converse and Stein's lemma for channel discrimination]\label{cor:discrimination}
For every $r>D^{\reg}(\N\|\M)$, there is $\kappa_r>0$ such that every $n$-use adaptive protocol whose type-II error is at most $2^{-nr}$ has success probability at most $2^{-n\kappa_r}$:
\begin{equation}\label{eq:SC}
 \beta_n\le2^{-nr}\quad\Longrightarrow\quad p_n\le2^{-n\kappa_r}.
\end{equation}
For $\epsilon\in(0,1)$, let $\beta_{n,\epsilon}$ be the infimum of the type-II errors among $n$-use adaptive protocols with $p_n\ge1-\epsilon$. Then
\begin{equation}\label{eq:app-stein}
 \lim_{n\to\infty}-\frac1n\log\beta_{n,\epsilon}
 =D^{\reg}(\N\|\M).
\end{equation}
\end{corollary}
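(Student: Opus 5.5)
The plan is to derive both parts of Corollary~\ref{cor:discrimination} from the adaptive testing bound \eqref{eq:app-testing} together with Theorem~\ref{thm:main}.

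\textbf{Strong converse \eqref{eq:SC}.} Fix $r>d=D^{\reg}(\N\|\M)$. Theorem~\ref{thm:main} gives $d_\alpha\downarrow d$ as $\alpha\downarrow1$, so there is a fixed $\alpha>1$ with $d_\alpha<r$. Insert $\beta_n\le2^{-nr}$ into \eqref{eq:app-testing}:
\[
 p_n\le2^{n\frac{\alpha-1}{\alpha}d_\alpha}\,2^{-nr\frac{\alpha-1}{\alpha}}
 =2^{-n\frac{\alpha-1}{\alpha}(r-d_\alpha)}.
\]
Then set $\kappa_r=\frac{\alpha-1}{\alpha}(r-d_\alpha)>0$. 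This constant depends only on $r$ and the chosen $\alpha$, not on $n$ or on the protocol. The cases $\beta_n=0$ and $p_n=0$ are already covered by the remark following \eqref{eq:app-testing}.

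\textbf{Stein's lemma \eqref{eq:app-stein}, converse direction.} Suppose $p_n\ge1-\epsilon>0$. For any $r>d$ and all $n$ large enough that $2^{-n\kappa_r}<1-\epsilon$, \eqref{eq:SC} forces $\beta_n>2^{-nr}$. Hence
\[
 \limsup_n-\tfrac1n\log\beta_{n,\epsilon}\le r,
\]
and letting $r\downarrow d$ gives $\limsup_n-\tfrac1n\log\beta_{n,\epsilon}\le d$. For $\epsilon$ close to one this step needs the exponential (or at least vanishing) bound, and that is exactly where continuity enters.

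\textbf{Stein's lemma \eqref{eq:app-stein}, achievability direction.} This direction uses only parallel protocols, which are a special case of adaptive ones. Fix $k$ and an input $\psi$ on $R\otimes A^{\otimes k}$ with
\[
 D\bigl((\id\otimes\N^{\otimes k})(\psi)\|(\id\otimes\M^{\otimes k})(\psi)\bigr)\ge k(d-\eta).
\]
Feed $\psi^{\otimes\lfloor n/k\rfloor}$ into the channel and ignore any leftover uses, for instance by tracing out their outputs. Quantum Stein's lemma for i.i.d.\ states \cite{HP,ON} then produces tests with $p_n\ge1-\epsilon$ and type-II exponent at least
\[
 \frac1k D\bigl((\id\otimes\N^{\otimes k})(\psi)\|(\id\otimes\M^{\otimes k})(\psi)\bigr)-o(1)\ge d-\eta-o(1).
\]
Dividing the exponent by $n$ rather than by $k\lfloor n/k\rfloor$ costs only a factor tending to one. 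Letting $\eta\downarrow0$ gives $\liminf_n-\tfrac1n\log\beta_{n,\epsilon}\ge d$. The supremum defining $D^{\reg}$ in \eqref{eq:regularized} may not be attained, but this is harmless because we only need inputs within $\eta$ of it.

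\textbf{Main obstacle.} The main obstacle is not technical here. All the difficulty sits in Theorem~\ref{thm:main}, which makes $\kappa_r$ positive for every $r>d$; without it one obtains only thresholds above $d_+$. The remaining points needing care are the bookkeeping in the achievability step (the block length $k$, the leftover uses, and the supremum not being attained) and the fact that the limit exists. Existence follows because the upper and lower bounds on the exponent coincide.
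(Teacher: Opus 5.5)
Your proof is correct and follows the paper's argument. You use the same choice of $\alpha>1$ with $\Dt_\alpha^{\reg}(\N\|\M)<r$ and the same $\kappa_r$ from \eqref{eq:app-testing} and Theorem~\ref{thm:main}, and the same block repetition with state Stein's lemma for achievability. The only cosmetic difference is in the Stein converse: you get it as the contrapositive of \eqref{eq:SC}, whereas the paper bounds $\beta_n\ge(1-\epsilon)^{\alpha/(\alpha-1)}2^{-n\Dt_\alpha^{\reg}(\N\|\M)}$ at fixed $\alpha$ and then lets $\alpha\downarrow1$; the two routes are equivalent.
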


\begin{proof}
For every $r>D^{\reg}(\N\|\M)$, Theorem~\ref{thm:main} guarantees that there exists an $\alpha>1$ with $\Dt_\alpha^{\reg}(\N\|\M)<r$. Set
$\kappa_r=\frac{\alpha-1}{\alpha}[r-\Dt_\alpha^{\reg}(\N\|\M)]>0$.
Substituting $\beta_n\le2^{-nr}$ into \eqref{eq:app-testing} proves \eqref{eq:SC}.

For the converse bound in \eqref{eq:app-stein}, \eqref{eq:app-testing} and $p_n\ge1-\epsilon$ give, uniformly over admissible adaptive protocols,
\[
 \beta_n\ge(1-\epsilon)^{\alpha/(\alpha-1)}
                2^{-n\Dt_\alpha^{\reg}(\N\|\M)}.
\]
This also bounds the infimum $\beta_{n,\epsilon}$, without requiring an optimal protocol. Therefore
\[
 -\frac1n\log\beta_{n,\epsilon}
 \le\Dt_\alpha^{\reg}(\N\|\M)
 +\frac\alpha{n(\alpha-1)}\log\frac1{1-\epsilon}.
\]
First take the upper limit as $n\to\infty$ at fixed $\alpha>1$, then let $\alpha\downarrow1$ using Theorem~\ref{thm:main}.

For achievability at $0<s<D^{\reg}(\N\|\M)$, \eqref{eq:regularized} supplies a block size $k$, a pure input on $R\otimes A^{\otimes k}$, and $s'>s$ such that its output states satisfy $D(\rho_k\|\sigma_k)>ks'$. Repeat this input $m$ times and apply state Stein's lemma \cite{HP,ON}: for large $m$, there are tests with acceptance tending to one and type-II error at most $2^{-mks'}$. For $n$ calls, take $m=\lfloor n/k\rfloor$ and ignore the outputs of the remaining calls. Since $mk/n\to1$, eventually $mks'\ge ns$, so these parallel tests satisfy $p_n\to1$ and $\beta_n\le2^{-ns}$. Their reference is $R^{\otimes m}$, whose dimension may grow with $n$. They are also valid adaptive protocols. Letting $s\uparrow D^{\reg}(\N\|\M)$ proves achievability; when $D^{\reg}(\N\|\M)=0$, nonnegativity of the error exponents suffices.
\end{proof}

The exponential strong converse \eqref{eq:SC} corresponds to \cite[Corollary 22 and Theorem 23]{FGW}. It also holds for parallel protocols by inclusion. The fixed-error limit holds for parallel protocols as well: the converse covers all adaptive tests, while the achievability tests above are parallel. This parallel limit is statement~(1) of \cite[Theorem 14]{Gour}.

\subsection{Subchannel AEP and approximation}
A \emph{subchannel} $\mathcal G:\cL(A)\to\cL(B)$ is a completely positive map that is trace nonincreasing: $\Tr\mathcal G(X)\le\Tr X$ for every $X\ge0$. Equivalently, $\mathcal G^*(I_B)\le I_A$. 
Subchannel AEP is the following statement:
\begin{corollary}[Subchannel AEP]\label{cor:subchannel-aep}
Let $\N,\M$ be channels. For every fixed $\epsilon\in(0,1)$, smoothing $\N^{\otimes n}$ over nearby subchannels has asymptotic max-relative entropy $D^{\reg}(\N\|\M)$:
\begin{equation}\label{eq:subchannel-aep}
 \lim_{n\to\infty}\frac1n
 \inf_{\substack{\mathcal G_n\ge_{\cp}0,\ \mathcal G_n^*(I)\le I\\
                 \norm{\mathcal G_n-\N^{\otimes n}}_\diamond\le\epsilon}}
 D_{\max}(\mathcal G_n\|\M^{\otimes n})
 =D^{\reg}(\N\|\M).
\end{equation}
\end{corollary}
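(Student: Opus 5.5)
Write $\N_n=\N^{\otimes n}$, $\M_n=\M^{\otimes n}$, $d=D^{\reg}(\N\|\M)$, and let $S_{n,\epsilon}$ denote the infimum in \eqref{eq:subchannel-aep}. The plan is to prove matching upper and lower bounds on $\frac1n S_{n,\epsilon}$. The upper bound uses the fixed-dilation filter of Lemma~\ref{lem:filter}. The lower bound uses the exponential strong converse of Corollary~\ref{cor:discrimination}, which rests on Theorem~\ref{thm:main}.

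\emph{Achievability: $\limsup_n\frac1n S_{n,\epsilon}\le d$.} Fix $r>d$. By \eqref{eq:weak} and Theorem~\ref{thm:main}, the hockey-stick divergence vanishes at rate $r$. Indeed, choosing $\alpha>1$ with $\Dt_\alpha^{\reg}<r$, \eqref{eq:high} gives $\eta_n:=E_{2^{nr}}(\N_n\|\M_n)\to0$. Apply Lemma~\ref{lem:filter} to a Stinespring isometry $V_n$ of $\N_n$. This gives $W_n$ with $\Phi_{W_n}\le_{\cp}2^{nr}\M_n$ and $\norm{V_n-W_n}_\infty\le\sqrt{\eta_n}$.

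The map $\Phi_{W_n}$ need not be trace nonincreasing, since $\norm{W_n}_\infty\le1+\sqrt{\eta_n}$. I will therefore rescale: set $\mathcal G_n=(1+\sqrt{\eta_n})^{-2}\Phi_{W_n}$. This is a subchannel, and $D_{\max}(\mathcal G_n\|\M_n)\le nr$.

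Next I bound the diamond distance. For any input $\psi$, the two outputs are partial traces of the rank-one operators $|v\rangle\langle v|$ and $|w\rangle\langle w|$, where $|v\rangle=(I\otimes V_n)|\psi\rangle$ and $|w\rangle=(1+\sqrt{\eta_n})^{-1}(I\otimes W_n)|\psi\rangle$. The trace norm of their difference is at most $\norm{v-w}(\norm v+\norm w)=O(\sqrt{\eta_n})$. Hence $\norm{\mathcal G_n-\N_n}_\diamond\to0$, which is eventually at most $\epsilon$. Letting $r\downarrow d$ gives the upper bound.

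\emph{Converse: $\liminf_n\frac1n S_{n,\epsilon}\ge d$.} Suppose a subchannel $\mathcal G_n$ satisfies $\norm{\mathcal G_n-\N_n}_\diamond\le\epsilon$ and $\mathcal G_n\le_{\cp}2^{nr}\M_n$ with $r<d$. I will derive a contradiction for large $n$. Choose $r<r'<d$. The achievability part of Corollary~\ref{cor:discrimination} gives parallel tests $(\psi_n,T_n)$ with $\Tr T_n(\id\otimes\M_n)(\psi_n)\le2^{-nr'}$ and acceptance under $\N_n$ tending to one.

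Applying the same tests to $\mathcal G_n$, the acceptance is at least $p_n-\epsilon/2\cdot$const, i.e.\ $\ge 1-\epsilon-o(1)$, using the diamond bound, possibly with a factor $1/2$ depending on convention. On the other hand, the CP bound gives acceptance at most $2^{nr}\cdot2^{-nr'}\to0$.

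This contradicts the lower bound when $\epsilon<1$, at least if the convention works out to acceptance $\ge p_n-\epsilon$. With the unnormalized diamond norm, the lower bound $p_n-\epsilon/2\cdot$const stays positive for all $\epsilon<1$ in either convention. Thus $\liminf\ge r$ for every $r<d$. The case $d=0$ is trivial because $D_{\max}\ge$ a negative constant.

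I now doubt whether $\epsilon$-smoothing with $\epsilon$ near $1$ requires care: $\norm{\cdot}_\diamond\le\epsilon$ with $\epsilon\in(0,1)$ suffices either way, because $p_n\to1$. Interestingly, this converse needs only Stein achievability at the state level, not Theorem~\ref{thm:main}.

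The main obstacle I expect is the achievability step, in two places. First, the filter operator must be converted to a genuine subchannel without losing the rate. Second, the hockey-stick divergence must vanish at \emph{every} rate above $d$, not merely above $d_+$. Only Theorem~\ref{thm:main} identifies these two thresholds, through \eqref{eq:highvanishing} with $d_+=d$.
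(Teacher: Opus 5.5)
Your proposal is correct and follows the paper's proof essentially step for step. The upper bound rescales the fixed-dilation filter operator at a rate $r>D^{\reg}(\N\|\M)$, made available by Theorem~\ref{thm:main} and \eqref{eq:high}, into a subchannel and bounds the diamond distance by the dilation estimate, as in Corollary~\ref{cor:subchannel} (you only need $\eta_n\to0$ rather than the exponential rate); the lower bound applies the Stein block tests to any feasible smoother to get $p_n-\epsilon\le\lambda 2^{-ns}$. One cleanup: with the paper's unhalved diamond norm, $\Tr[TX]\le\norm{X}_1$ for $0\le T\le I$ gives acceptance at least $p_n-\epsilon$ directly, and your claim that the argument works ``in either convention'' is false, since under a halved norm with $\epsilon\ge1/2$ the zero map would be a feasible smoother with $D_{\max}=-\infty$.
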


Requiring the smoother to preserve trace can invalidate the AEP \cite[Section VI]{Gour}, so the subchannel AEP is the natural alternative. Cf. statement~(3) of \cite[Theorem 14]{Gour}.\footnote{We use the unhalved diamond norm, whereas Gour uses generalized diamond distance, which is at most the unhalved diamond norm and bounds the decrease of a test's acceptance probability \cite[Eqs.~(31)--(32), (211)]{Gour}, so the same argument covers both metrics.}

We prove \eqref{eq:subchannel-aep} using the following stronger uniform approximation, obtained from continuity, the R\'enyi testing bound [Eq.~\eqref{eq:high} of Lemma~\ref{lem:testing}], and Lemma~\ref{lem:filter}.
\begin{corollary}[Exponential subchannel approximation]\label{cor:subchannel}
For every $r>D^{\reg}(\N\|\M)$, there are $c_r>0$ and subchannels
$\mathcal G_n:\cL(A^{\otimes n})\to\cL(B^{\otimes n})$, for $n\ge1$, such that
\begin{equation}\label{eq:dominated}
 \mathcal G_n\le_{\cp}2^{nr}\M^{\otimes n},\qquad
 \norm{\mathcal G_n-\N^{\otimes n}}_\diamond\le4\,2^{-nc_r}.
\end{equation}
\end{corollary}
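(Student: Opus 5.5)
The plan is to combine Theorem~\ref{thm:main}, the R\'enyi testing bound \eqref{eq:high}, and Lemma~\ref{lem:filter} at the single rate $\lambda=2^{nr}$. A rescaling step then turns the filtered map into a subchannel. Throughout, $D_{\max}(\N\|\M)<\infty$ is assumed, as stated at the start of this section.

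First, fix $r>D^{\reg}(\N\|\M)$. Theorem~\ref{thm:main} gives some $\alpha>1$ with $\Dt_\alpha^{\reg}(\N\|\M)<r$. Then \eqref{eq:high}, applied with $\ell=r$, gives for every $n$
\[
 E_{2^{nr}}(\N^{\otimes n}\|\M^{\otimes n})\le 2^{-2nc_r},\qquad
 c_r=\tfrac12(\alpha-1)\bigl[r-\Dt_\alpha^{\reg}(\N\|\M)\bigr]>0 .
\]
Set $\varepsilon_n=2^{-nc_r}$. Choose a Stinespring isometry $V_n$ of $\N^{\otimes n}$ and apply Lemma~\ref{lem:filter} to the pair $\N^{\otimes n},\M^{\otimes n}$ with $\lambda=2^{nr}$. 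This gives $W_n$ with $\Phi_{W_n}\le_{\cp}2^{nr}\M^{\otimes n}$ and $\norm{V_n-W_n}_\infty\le\varepsilon_n$, hence $\norm{W_n}_\infty\le1+\varepsilon_n$.

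Second, $\Phi_{W_n}$ need not be trace nonincreasing, so rescale. Put $W_n'=W_n/(1+\varepsilon_n)$ and $\mathcal G_n=\Phi_{W_n'}$. Then $\mathcal G_n^*(I)=W_n'^\dagger W_n'\le I$, so $\mathcal G_n$ is a subchannel. Scaling down preserves the CP bound, so $\mathcal G_n\le_{\cp}\Phi_{W_n}\le_{\cp}2^{nr}\M^{\otimes n}$. By the triangle inequality,
\[
 \norm{V_n-W_n'}_\infty\le\norm{V_n-W_n}_\infty+\norm{W_n}_\infty\tfrac{\varepsilon_n}{1+\varepsilon_n}\le 2\varepsilon_n ,
\]
and $\norm{W_n'}_\infty\le1$.

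Third, bound the diamond distance by an operator-norm estimate. It suffices to consider pure inputs $\psi$ on $R\otimes A^{\otimes n}$. Write $|x\rangle=(I\otimes V_n)|\psi\rangle$ and $|y\rangle=(I\otimes W_n')|\psi\rangle$. Partial trace is contractive in trace norm, so
\[
 \norm{(\id\otimes(\N^{\otimes n}-\mathcal G_n))(\psi)}_1\le\norm{|x\rangle\!\langle x|-|y\rangle\!\langle y|}_1\le\norm{x-y}\bigl(\norm x+\norm y\bigr).
\]
This is at most $2\varepsilon_n\cdot 2=4\,2^{-nc_r}$, which is \eqref{eq:dominated}. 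The vector inequality follows from writing $xx^\dagger-yy^\dagger=(x-y)x^\dagger+y(x-y)^\dagger$.

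The main obstacle is the rescaling: Gour's filter only controls $\norm{V-W}_\infty$ and not trace nonincrease. One has to check that dividing by $1+\varepsilon_n$ costs only another $O(\varepsilon_n)$ in operator norm, so that the constant $4$ and the exponent $c_r$ survive. Everything else is a direct application of earlier results.
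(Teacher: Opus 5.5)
Your proposal is correct and matches the paper's proof essentially step for step. You use the same choice of $\alpha>1$ and $c_r=\tfrac12(\alpha-1)[r-\Dt_\alpha^{\reg}(\N\|\M)]$, a single application of Lemma~\ref{lem:filter} at $\lambda=2^{nr}$, the same rescaling by $1+\varepsilon_n$ giving $\norm{V_n-W_n'}_\infty\le2\varepsilon_n$, and the same dilation estimate $\norm{\Phi_U-\Phi_V}_\diamond\le(\norm U_\infty+\norm V_\infty)\norm{U-V}_\infty$, which you derive directly on pure inputs instead of citing it.
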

\begin{proof}
Choose $\alpha>1$ with $\Dt_\alpha^{\reg}(\N\|\M)<r$, and let
$c_r=\tfrac12(\alpha-1)[r-\Dt_\alpha^{\reg}(\N\|\M)]$ and $\eta_n=2^{-nc_r}$.
By Lemma~\ref{lem:testing}~\eqref{eq:high} and Lemma~\ref{lem:filter}, a Stinespring isometry $V_n$ of $\N^{\otimes n}$ admits $W_n$ in the same environment with
\[
 \Phi_{W_n}\le_{\cp}2^{nr}\M^{\otimes n},\qquad
 \norm{V_n-W_n}_\infty\le\eta_n.
\]
Since $V_n$ is an isometry, $\norm{W_n}_\infty\le1+\eta_n$. Define $W_n'=W_n/(1+\eta_n)$ and $\mathcal G_n=\Phi_{W_n'}$. Then $\norm{W_n'}_\infty\le1$, so
\[
 \mathcal G_n^*(I_{B^{\otimes n}})=W_n'^\dagger W_n'\le I_{A^{\otimes n}}.
\]
Thus $\mathcal G_n$ is a subchannel. Rescaling also preserves the required domination:
\[
 \mathcal G_n=\frac{\Phi_{W_n}}{(1+\eta_n)^2}
 \le_{\cp}\frac{2^{nr}}{(1+\eta_n)^2}\M^{\otimes n}
 \le_{\cp}2^{nr}\M^{\otimes n}.
\]
For the approximation error, the triangle inequality gives
\[
 \norm{V_n-W_n'}_\infty
 \le\norm{V_n-W_n}_\infty+\norm{W_n-W_n'}_\infty
 \le\eta_n+\frac{\eta_n}{1+\eta_n}\norm{W_n}_\infty
 \le2\eta_n.
\]
For operators $U,V$ with the same environment,
\[
 \Phi_U-\Phi_V
 =\Tr_E\!\left[(U-V)(\cdot)U^\dagger+V(\cdot)(U-V)^\dagger\right].
\]
The trace-norm triangle inequality, also after tensoring with an arbitrary reference, gives the dilation estimate \cite{KSW}
\[
 \norm{\Phi_U-\Phi_V}_\diamond
 \le(\norm U_\infty+\norm V_\infty)\norm{U-V}_\infty
\]
with $U=W_n'$ and $V=V_n$. Since $\Phi_{V_n}=\N^{\otimes n}$, this gives
\[
 \norm{\mathcal G_n-\N^{\otimes n}}_\diamond
 \le(\norm{W_n'}_\infty+\norm{V_n}_\infty)\norm{W_n'-V_n}_\infty
 \le(1+1)\,2\eta_n=4\,2^{-nc_r},
\]
proving \eqref{eq:dominated}.
\end{proof}

\begin{proof}[Proof of Corollary~\ref{cor:subchannel-aep}]
For the upper bound in \eqref{eq:subchannel-aep}, fix $r>D^{\reg}(\N\|\M)$. Corollary~\ref{cor:subchannel} gives feasible maps for all sufficiently large $n$, since $4\,2^{-nc_r}\le\epsilon$, with $D_{\max}(\mathcal G_n\|\M^{\otimes n})\le nr$. Take the upper limit and then let $r\downarrow D^{\reg}(\N\|\M)$.

For the converse, let $(\psi_n,T_n)$ be the parallel input and test from the block construction proving \eqref{eq:app-stein} at $0<s<D^{\reg}(\N\|\M)$. The diamond-distance bound limits the change of this test's acceptance probability to $\epsilon$: pair the output difference with $0\le T_n\le I$ and bound it by its trace norm. Thus every feasible smoother and every $\lambda>0$ with $\mathcal G_n\le_{\cp}\lambda\M^{\otimes n}$ satisfy
\[
 p_n-\epsilon
 \le\Tr[T_n(\id\otimes\mathcal G_n)(\psi_n)]
 \le\lambda\beta_n\le\lambda2^{-ns}.
\]
Since $p_n-\epsilon\to1-\epsilon>0$, its max-relative entropy is at least $ns+\log(p_n-\epsilon)$ for large $n$, and the logarithmic term divided by $n$ vanishes. Taking the lower limit and then $s\uparrow D^{\reg}(\N\|\M)$ proves the lower bound. When $D^{\reg}(\N\|\M)=0$, use any input state $\psi$ to obtain
\[
 1-\epsilon\le\Tr[(\id\otimes\mathcal G_n)(\psi)]\le\lambda.
\]
Thus $D_{\max}(\mathcal G_n\|\M^{\otimes n})\ge\log(1-\epsilon)$, whose normalized limit is zero.
\end{proof}

\subsection{Sharp testing threshold}
The hockey-stick divergence has a direct interpretation as a parallel binary test at a rate $r\ge0$. Choose a pure joint input $\psi_n$ on $R\otimes A^{\otimes n}$, with $R\simeq A^{\otimes n}$, and a test $0\le T_n\le I$ on $R\otimes B^{\otimes n}$. Let $p_n$ and $\beta_n$ be the probabilities that this test accepts $\N$ when the channel is $\N$ and $\M$, respectively. By \eqref{eq:hchannel},
\[
 E_{2^{nr}}(\N^{\otimes n}\|\M^{\otimes n})
 =\max_{\psi_n,T_n}\bigl(p_n-2^{nr}\beta_n\bigr).
\]
The factor $2^{nr}$ penalizes false acceptance under $\M$ at rate $r$. If this maximum approaches one, the optimizing tests have $p_n\to1$ and $2^{nr}\beta_n\to0$; if it approaches zero, no parallel test retains a positive limiting payoff. The following zero-one law follows immediately from our results.

\begin{corollary}[Sharp parallel testing threshold]\label{cor:sharp-testing}
For every $r\ge0$ with $r\ne D^{\reg}(\N\|\M)$,\footnote{No assertion is made at the threshold itself.}
\begin{equation}\label{eq:app-sharp}
 \lim_{n\to\infty}E_{2^{nr}}(\N^{\otimes n}\|\M^{\otimes n})
 =\begin{cases}
   1,&0\le r<D^{\reg}(\N\|\M),\\
   0,&r>D^{\reg}(\N\|\M).
  \end{cases}
\end{equation}
\end{corollary}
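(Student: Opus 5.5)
Two regimes, each reducing to earlier results. Write $d=D^{\reg}(\N\|\M)$; throughout this section $D_{\max}(\N\|\M)<\infty$, so Lemma~\ref{lem:testing} and Theorem~\ref{thm:main} apply.

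\emph{Case $r>d$.} By Theorem~\ref{thm:main}, $d_+=d<r$. Hence there is a fixed $\alpha>1$ with $\Dt_\alpha^{\reg}(\N\|\M)<r$, and \eqref{eq:high} with $\ell=r$ gives
\[
E_{2^{nr}}(\N^{\otimes n}\|\M^{\otimes n})\le2^{-n(\alpha-1)[r-\Dt_\alpha^{\reg}(\N\|\M)]}\to0.
\]
This is \eqref{eq:highvanishing} at $\ell=r$, now valid for every $r>d$ because $d_+=d$. Since $E_\lambda\ge0$, the limit is $0$, and the decay is even exponential.

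\emph{Case $0\le r<d$.} Here we exhibit parallel tests with $p_n\to1$ and $2^{nr}\beta_n\to0$. Reuse the block construction from the achievability part of Corollary~\ref{cor:discrimination}. Pick $s$ with $r<s<d$. That construction supplies pure inputs $\psi_n$ and tests $T_n$ with $p_n\to1$ and $\beta_n\le2^{-ns}$ for all large $n$. It uses an $m$-fold product of a $k$-block input and state Stein's lemma; the unused calls receive an arbitrary input, which is still a pure joint input. Then
\[
E_{2^{nr}}\ge p_n-2^{nr}\beta_n\ge p_n-2^{-n(s-r)}\to1,
\]
and $E_\lambda\le1$ gives the limit $1$.

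One point needs checking. In \eqref{eq:hchannel} the reference is restricted to $R\simeq A^{\otimes n}$, whereas the Stein construction uses the reference $R^{\otimes m}$ together with idle inputs. By Schmidt decomposition, as noted in the preliminaries, any pure input can be compressed to a reference of dimension $(\dim A)^n$ by an isometry on the reference. This isometry is applied identically under both channels and can be absorbed into the test, so $p_n$ and $\beta_n$ are unchanged. The edge case $d=0$ makes the first regime vacuous, and when $r=0<d$ the same argument works.

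The main obstacle is entirely upstream: the vanishing at every $r>d$ needs $d_+=d$, which is exactly Theorem~\ref{thm:main}. Without continuity, \eqref{eq:weak} alone would only bound the limit by $d/r<1$, not give $0$. The remaining care is the routine bookkeeping of reference dimension and idle calls in the lower regime.
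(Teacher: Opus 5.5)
Your proposal is correct and follows the paper's proof essentially verbatim. In the lower regime you reuse the parallel block Stein tests from Corollary~\ref{cor:discrimination} with $r<s<d$ to get $E_{2^{nr}}\ge p_n-2^{-n(s-r)}\to1$, and in the upper regime you combine Theorem~\ref{thm:main} with \eqref{eq:high} at $\ell=r$ to get exponential decay. Your reference-dimension check, which the paper leaves implicit, is valid and can even be simplified: $\dim R^{\otimes m}=(\dim A)^{mk}\le(\dim A)^n$, so the block input already embeds into $R\simeq A^{\otimes n}$ without a Schmidt argument.
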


\begin{proof}
For $0\le r<D^{\reg}(\N\|\M)$, choose $s$ strictly between them. The parallel block tests constructed for \eqref{eq:app-stein}, as in \cite[Lemma 12]{Gour}, satisfy $p_n\to1$ and $\beta_n\le2^{-ns}$, so
\[
 1\ge E_{2^{nr}}(\N^{\otimes n}\|\M^{\otimes n})
 \ge p_n-2^{-n(s-r)}\longrightarrow1.
\]
For $r>D^{\reg}(\N\|\M)$, Theorem~\ref{thm:main} supplies $\alpha>1$ with $\Dt_\alpha^{\reg}(\N\|\M)<r$. Lemma~\ref{lem:testing}~\eqref{eq:high} then gives
\[
 0\le E_{2^{nr}}(\N^{\otimes n}\|\M^{\otimes n})
 \le2^{-n(\alpha-1)[r-\Dt_\alpha^{\reg}(\N\|\M)]}\longrightarrow0.
\]
\end{proof}

The above-threshold assertion is statement~(4) of \cite[Theorem 14]{Gour}, here with exponential decay; the full off-threshold law is \cite[Eq.~(209)]{Gour}.

Gour's equivalence theorem also gives the channel Lorenz AEP from this same above-threshold testing condition \cite[Theorem 14, (4) implies (2)]{Gour}, which we shall not discuss in detail.\par\medskip

\noindent\textbf{Acknowledgments.}
We would like to thank Tony Metger, Renato Renner and David Sutter for initial collaborations. We thank David Sutter and Lukas Schmitt for coordinating on arXiv submission. YY acknowledges the support of the National Natural Science Foundation of China via the Excellent Young Scientists Fund (Hong Kong and Macau) Project 12322516 and the support of Research Grants Council (RGC) of Hong Kong SAR via the NSFC/RGC Joint Research Scheme N\_HKU7107/24.

\end{document}